\newtheorem{prop}{PROPOSITION}
\begin{document}
\title{Quantum measurements with prescribed symmetry}
\author{Wojciech Bruzda}
\affiliation{Institute of Physics, Jagiellonian University, Krak\'ow, Poland}
\author{Dardo Goyeneche}
\affiliation{Institute of Physics, Jagiellonian University, Krak\'ow, Poland}
\affiliation{Faculty of Applied Physics and Mathematics, Technical University of Gda\'{n}sk, 80-233 Gda\'{n}sk, Poland}
\author{Karol {\.Z}yczkowski}
\affiliation{Institute of Physics, Jagiellonian University, Krak\'ow, Poland}
\affiliation{Center for Theoretical Physics, Polish Academy of Sciences, Warsaw, Poland}

\date{July 4, 2017}

\begin{abstract}
We introduce a method to determine whether a given generalised quantum measurement is isolated or it belongs to a family of measurements having the same prescribed symmetry. The technique proposed 
reduces to solving a linear system of equations in some relevant cases. As consequence, we provide a simple derivation of the maximal family of Symmetric Informationally Complete measurement (SIC)-POVM in dimension 3. Furthermore, we show that the following remarkable geometrical structures are isolated, so that free parameters cannot be introduced: \emph{(a)} maximal sets of mutually unbiased bases in prime power dimensions from 4 to 16, \emph{(b)} SIC-POVM in dimensions from 4 to 16 and \emph{(c)} contextuality Kochen-Specker sets in dimension 3, 4 and 6, composed of 13, 18 and 21 vectors, respectively.
\end{abstract}
\maketitle
Keywords: Mutually unbiased bases, SIC-POVM, defect of a unitary matrix.

\section{Introduction}
Positive Operator Valued Measure (POVM) is the most general kind of measurement in quantum mechanics, which generalizes projective measurements. Some POVM having a prescribed symmetry play a crucial role in quantum mechanics: Symmetric Informationally Complete SIC-POVM \cite{RBSC04,FHS17} and Mutually Unbiased Bases (MUB) \cite{I81,WF89}. These geometrical structures have important applications in quantum theory: SIC-POVM and MUB allow us to unambiguously reconstruct any density matrix of size $d$ \cite{RBSC04,I81} and define entropic uncertainty relations \cite{WYM09,R13}. Even more, MUB are important to detect entanglement \cite{SHBAH12}, bound entanglement \cite{HL14}, and to lock classical information in quantum states \cite{BW07}.

Finitely many SIC-POVM are known in dimension $d\leq64$ \cite{SG10}, including a 1-parametric family in dimension three \cite{RBSC04}. The existence of SIC-POVM with free parameters in dimension $d>3$ is still an open problem. Moreover, maximal sets of MUB are known to be isolated in dimensions two to five \cite{BWB10} and finitely many non-equivalent maximal sets of MUB are known for dimensions three to five \cite{SK14} and $N$ qudit systems \cite{K12}. On the other hand, families of symmetric POVM are useful for practical applications, as the parameters can be optimized for different convenient purposes. For example, from the one-parametric family of SIC-POVM existing in dimension three \cite{RBSC04} only a single member maximizes the informational power, that is, the classical capacity of a quantum-classical channel generated by the SIC-POVM \cite{S14}. Furthermore, inequivalent sets of MUB provide different estimation of errors in quantum tomography \cite{RHKLS13}. 

Highly symmetric quantum measurements, like SIC-POVM or MUBs play an important role in quantum information and foundations of quantum theory. On one hand, it is interesting itself to design a mathematical tool that allows one to construct a family of POVM having a prescribed symmetry from a given particular solution. On the other hand, construction of such families of solutions provides flexibility when designing experimental implementations of these measurement sets. For instance, a detailed description of a complete list of  solutions of a set of $k$ MUB in dimension $d$ may be helpful in tackling the problem whether an extended set of $k+1$ MUB exists. Furthermore, it is also interesting to highlight those quantum measurements having a prescribed symmetry that do not belong to a family, which makes them special. A possible application of such isolated cases is the existence of a unique solution optimizing a given function. For instance, isolated solutions might define sets of measurement having a unique maximal violation of a Bell inequality, which is a fundamental ingredient for \emph{self-testing} \cite{MY04}.

In this work, we present a method to introduce free parameters in generalized measurements having a predefined geometrical structure. The method proposed divides the entire non-linear problem, called $\mathcal{P}^{(1)}_{NL}$, into a linear problem $\mathcal{P}^{(2)}_L$ and a secondary non-linear problem $\mathcal{P}^{(3)}_{NL}$, which is simpler than $\mathcal{P}^{(1)}_{NL}$. Remarkably, in some cases the linear problem $\mathcal{P}^{(2)}_L$ provides a definite answer to the full problem $\mathcal{P}_{NL}$.

The paper is organized as follows: in Section \ref{S2} we establish a connection between any given POVM and certain hermitian unitary matrices having constant diagonal.

In Section \ref{S3} we apply the notion of a defect of a unitary matrix to identify isolated cases of generalized 
quantum measurements having a prescribed symmetry, for which no free parameters can be introduced.
Furthermore, in other cases we present a constructive method to extend known solutions to an entire family by introducing free parameters.

In Section \ref{S4} we show that known maximal sets of MUB in dimensions 4, 8, 9 and 16 and known SIC-POVM in dimensions 4 to 16 are isolated. We also study the robustness of our results for a given accuracy in specifying the POVM, which allows us to derive conclusive results from approximate solutions. In Section \ref{S5} we find an upper bound for the maximal number of free parameters that can be introduced in sets of $2\leq m\leq d+1$ MUB in dimension $d$ and in some classes of equiangular tight frames. Moreover, we show how the method works to give the known one-parameter families of SIC-POVM in dimension three. In Section \ref{S6} we prove that some existing Kochen-Specker sets from quantum contextuality are isolated. In Section \ref{S7} we summarize our results and pose open questions.

\section{Quantum measurements and tight frames}\label{S2}
A POVM $\{\Gamma_j\}$ is a set of $N$ positive semidefinite subnormalized operators defined in dimension $d$ such that $\sum_{j=0}^{N-1}\Gamma_j=\mathbb{I}_d$, where $\mathbb{I}_d$ is the identity matrix of size $d$. Along the work, we will restrict our attention to rank-one POVM and consider rank-one projectors $\Pi_j$, being proportional to the elements of POVM. That is, $\Pi_j=c\Gamma_j$, where $c=N/d$. For simplicity, we will refer to the set of projectors $\{\Pi_j\}$ as a POVM, understanding that they are formally proportional to the elements of a POVM. The rank-one projectors $\Pi_j$ satisfy the geometrical relation
\begin{equation}\label{TrS}
\mathrm{Tr}(\Pi_i\Pi_j)=S_{ij},
\end{equation}
where $S$ is a real symmetric matrix of size $N$. It is interesting to ask about the most general projectors having the prescribed symmetry (\ref{TrS}) given by a real symmetric matrix $S$. For example, the case $S=\mathbb{I}_{N}+\frac{N-d}{d(N-1)}(\mathbb{J}_{N}-\mathbb{I}_{N})$ corresponds to equiangular tight frames composed by $N$ vectors in dimension $d$. Here, $\mathbb{J}_{N}$ denotes the matrix of size $N$ having all entries equal to unity. We recall that a set of $N$ vectors $\{|\phi_i\rangle\}$ defined in dimension $d$ forms an \emph{equiangular tight frame} (ETF) if $|\langle\phi_i|\phi_j\rangle|^2=d(N-1)/(N-d)$, for every $i\neq j=0,\dots,d^2-1$. 

A remarkably important subclass of ETF is given by the so-called \emph{Symmetric Informationally Complete} (SIC)-POVM \cite{RBSC04}, corresponding to the case $N=d^2$. Also, two orthonormal bases $|\phi_i\rangle$ and $|\psi_j\rangle$ in dimension $d$ define a pair of  \emph{mutually unbiased bases} (MUB) if $|\langle\phi_i|\psi_j\rangle|^2=1/d$, for every $i,j=0,\dots,d-1$. A set of $m$ orthonormal bases is mutually unbiased if every pair of the set is mutually unbiased. Also, a set of $m$ MUB in dimension $d$ has associated the symmetric matrix $S=\mathbb{I}_{dm}+\frac{1}{d}(\mathbb{J}_{m}-\mathbb{I}_{m})\otimes\mathbb{J}_{d}$. For a recent review on discrete structures in Hilbert spaces, including MUB and SIC-POVM, see Ref. \cite{BZ17}.

Let us recall a close connection existing between POVM and \emph{tight frames}. A set of rank-one projectors $\{\Pi_j\}$ defines a tight frame if there exists a real number $A>0$ such that $\sum_{j=0}^{N-1}\mathrm{Tr}(\Omega\Pi_j)=A\mathrm{Tr}(\Omega^2)=A$, for any rank-one projector $\Omega$ acting on dimension $d$. Therefore, POVM are tight frames for $A=c$. A crucial property for our work is the fact that the Gram matrix associated to a tight frame, or POVM, is closely related to an hermitian unitary matrix, as we will see in Proposition \ref{prop1}. We recall that the Gram matrix of a set of $N$ vectors $|\phi_j\rangle$ is given by 
\begin{equation}\label{Gram}
G_{ij}=\langle\phi_i|\phi_j\rangle,
\end{equation}
where $i,j=0,\dots N-1$. For example, the Gram matrix of an equiangular tight frame composed by $N$ vectors in dimension $d$ has the form
\begin{equation}\label{GramSIC}
G_{ETF}=\left(\begin{array}{cccc}
1&re^{i\alpha_{12}}&\dots&re^{i\alpha_{1N}}\\
re^{-i\alpha_{12}}&1&\dots&re^{i\alpha_{2N}}\\
\vdots&\vdots &\ddots&\vdots\\
re^{-i\alpha_{1N}}&re^{-i\alpha_{2N}}-1&\dots&1
\end{array}\right),
\end{equation}
where $r^2=d(N-1)/(N-d)$. Furthermore, the Gram matrix of a set of $m+1$ MUB $\{\mathbb{I}_d,H_1,H_2,\dots,H_m\}$ in dimension $d$ is given by
\begin{equation}\label{GramMUB}
G_{MUB}=\left(\begin{array}{ccccc}
\mathbb{I}_d&H_1&H_2&\cdots&H_m\\
H_1^{\dag}&\mathbb{I}_d&H_1^{\dag}H_2&\cdots&H_1^{\dag}H_m\\
H_2^{\dag}&H_2^{\dag}H_1&\mathbb{I}_d&\cdots&H_2^{\dag}H_m\\
\vdots&\vdots&\vdots&\ddots&\vdots\\
H_m^{\dag}&H_m^{\dag}H_1&H_m^{\dag}H_2&\cdots&\mathbb{I}_d\\
\end{array}\right),
\end{equation}
where $H_1,H_2,\dots,H_m$ are suitable unitary complex Hadamard matrices, so that $H_iH_i^{\dag}=\mathbb{I}$ and $|(H_i)_{jk}|^2=1/d$ for every $j,k=0,\dots,N-1$ and $i=1,\dots,m$, see \cite{TZ06}. Let us establish a connection between Gram matrices of POVM and a special kind of unitary hermitian matrices.
\begin{prop}\label{prop1}
Let $\Pi_j$ be a rank-one POVM composed by $N$ vectors in dimension $d$ and $G$ be the corresponding Gram matrix. Then, the matrix $U=\mathbb{I}_N-\frac{2d}{N}G$ is unitary.
\end{prop}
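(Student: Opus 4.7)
The plan is to establish that $G$ satisfies the quadratic relation $G^{2}=\frac{N}{d}G$, from which unitarity of $U=\mathbb{I}_{N}-\frac{2d}{N}G$ follows by a one-line algebraic check. Hermiticity of $U$ is immediate since $G_{ij}=\langle\phi_i|\phi_j\rangle$ is a Gram matrix and therefore hermitian; combined with $U^{2}=\mathbb{I}_{N}$, this will yield $U^{\dagger}U=UU=\mathbb{I}_{N}$.

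To derive $G^{2}=\frac{N}{d}G$, I would introduce the $d\times N$ synthesis matrix $V$ whose $j$-th column is the unit vector $|\phi_j\rangle$ associated with the rank-one projector $\Pi_j=|\phi_j\rangle\langle\phi_j|$. Then by definition $G=V^{\dagger}V$. The POVM (or equivalently, tight-frame) condition $\sum_{j}\Gamma_{j}=\mathbb{I}_{d}$ together with the normalization $\Pi_j=\tfrac{N}{d}\Gamma_{j}$ gives
\begin{equation}
VV^{\dagger}=\sum_{j=0}^{N-1}|\phi_j\rangle\langle\phi_j|=\sum_{j=0}^{N-1}\Pi_{j}=\frac{N}{d}\mathbb{I}_{d}.
\end{equation}
Substituting this into $G^{2}=V^{\dagger}(VV^{\dagger})V$ produces $G^{2}=\frac{N}{d}V^{\dagger}V=\frac{N}{d}G$, which is the key identity.

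With this relation in hand, the remaining step is the direct expansion
\begin{equation}
U^{2}=\Bigl(\mathbb{I}_{N}-\tfrac{2d}{N}G\Bigr)^{2}=\mathbb{I}_{N}-\tfrac{4d}{N}G+\tfrac{4d^{2}}{N^{2}}G^{2}=\mathbb{I}_{N},
\end{equation}
where the last equality uses $G^{2}=\frac{N}{d}G$ to cancel the linear terms. Since $G=G^{\dagger}$ implies $U=U^{\dagger}$, one concludes $U^{\dagger}U=U^{2}=\mathbb{I}_{N}$, so $U$ is unitary.

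There is no real obstacle here: the whole argument reduces to recognising that a (normalised) tight frame yields an idempotent-like Gram matrix $G^{2}\propto G$, which is essentially the statement that $\frac{d}{N}G$ is the orthogonal projector onto the row space of $V$. The only points worth stating carefully are the normalization convention $\Pi_j=\frac{N}{d}\Gamma_j$ adopted in the excerpt (which fixes the coefficient $\frac{2d}{N}$ appearing in $U$) and the fact that hermiticity plus an involutive property ($U^{2}=\mathbb{I}_{N}$) is equivalent to $U$ being a unitary reflection.
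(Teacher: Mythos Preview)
Your proof is correct and follows essentially the same route as the paper: both rest on the identity $G^{2}=\tfrac{N}{d}G$, from which unitarity of $U=\mathbb{I}_{N}-\tfrac{2d}{N}G$ is immediate. The only minor differences are that you derive this identity explicitly via the synthesis operator $V$ (the paper quotes it from \cite{TDHS05}) and that you verify $U^{2}=\mathbb{I}_{N}$ by direct expansion rather than passing through the spectrum of $G$; your version is slightly more self-contained.
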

\begin{proof}
A Gram matrix $G$ represents a POVM composed by $N$ vectors in dimension $d\leq N$ if and only if $G^2=\frac{N}{d}G$ (cf. Prop. 1 in Ref.\cite{TDHS05}). From this property and taking into account that $\mathrm{Tr}(G)=N$, the spectrum of $G$ satisfies 
\begin{equation}
\lambda(G)=(\underbrace{N/d,\dots, N/d}_{d},\underbrace{0,\dots,0}_{N-d}).
\end{equation}
Therefore, $U=\mathbb{I}_N-\frac{2d}{N}G$ is a unitary matrix. 
\end{proof}
 
From Proposition (\ref{prop1}) we realize that the existence of a POVM with a symmetry prescribed by the matrix $S$ from (\ref{TrS}) is equivalent to prove the existence of a unitary hermitian matrix $U$ having positive constant diagonal $U_{ii}=1-2d/N$ and satisfying $|U_{ij}|=\frac{2d}{N}\sqrt{S_{ij}}$ for $i\neq j$. Unitary matrices $U=\mathbb{I}_N-\frac{2d}{N}G$ have been recently studied for the particular case of equiangular tight frames \cite{GT16}. In the Bloch sphere associated to a one-qubit system we have some relevant geometrical structures: orthonormal basis (line), 3 MUB (3 orthogonal lines) and SIC-POVM (tetrahedron). All these structures are unique, up to a global rotation. In higher dimensions, however, some geometrical structures allow one to introduce free parameters, that cannot be absorbed in a global rotation. In Section \ref{S3} we introduce the method, which considerably simplifies the study of this problem.

\section{Restricted defect and free parameters}\label{S3}
In this section, we derive the method to introduce the maximal possible number of free parameters into a given POVM composed by $N$ vectors in dimension $d$  associated to a given symmetric matrix $S$, see Eq.(\ref{TrS}). Using Proposition \ref{prop1}, this problem is equivalent to finding the most general real antisymmetric matrix $R$ of size $N$ such that 
\begin{equation}\label{UN}
V_{ij}(t)=U_{ij}e^{itR_{ij}},
\end{equation}
is a unitary matrix, provided that $U=\mathbb{I}_N-\frac{2d}{N}G$ is associated to a given particular POVM satisfying Eq.(\ref{TrS}). That is, $U$ is an hermitian unitary matrix having constant diagonal $U_{ii}=1-2d/N$ and $|U_{ij}|=\delta_{ij}-\frac{2d}{N}\sqrt{S_{ij}}$. Note that we have introduced a parameter $t$ in Eq.(\ref{UN}) for convenience, which can be set to $t=1$ after applying our method. The full problem is given as follows: \medskip

\noindent \textbf{Problem }$\mathbf{\mathcal{P}^{(1)}_{NL}:}$ Find the most general matrix $V_{\tau}(t)$ of size $N$ of the form (\ref{UN}), initially depending on $\tau$ parameters, such that
\begin{equation}\label{P1}
V_{\tau}(t)V_{\tau}(t)^{\dag}=\mathbb{I}_N.
\end{equation}
This problem implies to solve a system of non-linear coupled equations, which depends on $\tau=[N (N - 1)/2-z] - (N - 1)$ non-trivial variables, where $z$ is the number of zeros existing in the strictly upper triangular part of the matrix $R$. Note that $\tau$ is composed by the total number of parameters $R_{ij}$, i.e. $N (N - 1)/2-z$, minus the number of trivial variables ($N - 1$). These trivial parameters can be absorbed by applying the transformation $V\rightarrow EVE^{\dag}$, where $E=\mathrm{Diag}(1,e^{itR_{01}},\dots,e^{itR_{0(N-1)}})$. In order to simplify the resolution of problem $\mathcal{P}_{NL}$ we define the following linear problem:\medskip

\noindent\textbf{Problem }$\mathbf{\mathcal{P}^{(2)}_L:}$ Find the most general matrix $V_{\tau}(t)$ of size $N$, initially depending on $\tau$ parameters, such that
\begin{equation}\label{P2}
\lim_{t\rightarrow0}\frac{d}{dt}[V_{\tau}(t)V_{\tau}(t)^{\dag}]=0.
\end{equation}
Using (\ref{UN}), we can explicitly write Eq.(\ref{P2}) as
\begin{equation}\label{soe}
-2V_{k,k}V_{k,j}R_{j,k}+\sum_{l\neq j,k}V_{k,l}V_{l,j}(R_{k,l}-R_{j,l})=0,
\end{equation}
for $1\leqslant j<k\leqslant N$ and $1\leq l\leq N$, which is a linear problem on variables $R_{ij}$.
Note that $\mathcal{P}^{(2)}_L\subset\mathcal{P}^{(1)}_{NL}$, as Eq.(\ref{P2}) is a necessary condition to obtain Eq.(\ref{P1}). 

The linear problem  $\mathcal{P}^{(2)}_L$ allows us to simplify the full problem $\mathcal{P}_{NL}$ by determining $r$ out of $\tau$ variables $R_{ij}$, where $r$ is the number of linearly independent equations (\ref{soe}). After solving $\mathcal{P}^{(1)}_{L}$, the remaining number of free parameters $R_{ij}$ lead us to the definition of the \emph{restricted defect} $\Delta$ of the hermitian unitary matrix $U$. It reads,
\begin{equation}\label{Rd}
\Delta=\tau - r,
\end{equation}
where $\tau=(N-1)(N-2)/2-z$ and $z$ is the number of zeros existing in the strictly upper triangular part of the matrix $U$. Note that this quantity can be considered as a defect of a unitary matrix \cite{TZ08},
adopted to the case of matrices with a special structure.
The standard defect was used to define an upper bound on the number of free parameters allowed by complex Hadamard matrices \cite{TZ06} and forms, by construction, an upper bound for the restricted defect.
In both cases, the defect equal to zero implies that a given solution is isolated, 
so no free parameters can be introduced.

In general, the restricted defect represents an upper bound for the maximal number of free parameters allowed by the full problem $\mathcal{P}_{NL}$. If $\Delta=0$, then the full problem $\mathcal{P}_{NL}$ is solved by the linear problem $\mathcal{P}^{(1)}_{L}$. In this case, it is not possible to introduce free parameters into the matrix $V$. On the other hand, if $\Delta>0$, it is necessary to solve an additional non-linear problem in order to determine the continuous family of solutions.\medskip

\noindent\textbf{Problem }$\mathbf{\mathcal{P}^{(3)}_{NL}:}$ Find the most general matrix $V_{\Delta}(t)$ of size $N$, initially depending on $\Delta$ parameters, such that
\begin{equation}\label{P3}
V_{\Delta}(t)V_{\Delta}(t)^{\dag}=\mathbb{I}_N.
\end{equation}
Note that Problem $\mathcal{P}^{(2)}_{NL}$ is simpler than Problem $\mathcal{P}_{NL}$ as $\Delta<\tau$. This is so because $r>0$ in Eq.(\ref{Rd}). 
After solving Problem $\mathcal{P}^{(2)}_{NL}$ we can assume that $t=1$, without loss of generality.
 In Section \ref{S4} we will apply our results to SIC-POVM and maximal sets of MUB.\medskip

Let us first illustrate the method in action by considering two MUB for a single qubit system: $|\phi_i\rangle=|i\rangle$, $i=0,1$ and $|\psi_{\pm}\rangle=(|0\rangle\pm|1\rangle)/\sqrt{2}$. The Gram matrix (\ref{GramMUB}) associated to this set of $m=2$ MUB is given by
\begin{equation}
G_{MUB}=\frac{1}{2}\left(\begin{array}{cccc}
2&0&1&1\\
0&2&1&-1\\
1&1&2&0\\
1&-1&0&2
\end{array}\right).
\end{equation}
A family of two MUB stemming from this fixed set would have associated a Gram matrix of the form
\begin{equation}
\label{G4MUB}
G_{MUB}=\frac{1}{2}\left(\begin{array}{cccc}
2&0&e^{itR_{13}}&e^{itR_{14}}\\
0&2&e^{itR_{23}}&-e^{itR_{24}}\\
e^{-itR_{13}}&e^{-itR_{23}}&2&0\\
e^{-itR_{14}}&-e^{-itR_{24}}&0&2
\end{array}\right),
\end{equation}
and, from Prop.(\ref{prop1}), the unitary matrix $U=\mathbb{I}_N-\frac{2d}{N}G_{MUB}$. Note that full problem $\mathcal{P}^{(1)}_{NL}$ initially depends on $\tau=1$ non-trivial parameter, as $R_{13}$, $R_{14}$ and $R_{23}$ can be absorbed by considering the diagonal unitary operator $E=\mathrm{diag}[1,e^{-iR_{23}},e^{iR_{13}},e^{iR_{14}}]$ and the redefinition $V\rightarrow EVE^{\dag}$. Therefore, according to Eq.(\ref{UN}), and after considering the diagonal transformation $E$ we find that
\begin{equation}\label{U}
V_{\tau}(t)=\frac{1}{2}\left(\begin{array}{cccc}
0&0&1&1\\
0&0&1&-e^{itR_{24}}\\
1&1&0&0\\
1&-e^{-itR_{24}}&0&0
\end{array}\right).
\end{equation}
 Problem $\mathcal{P}^{(2)}_{L}$ implies the following equation
\begin{equation}
R_{24}=0,
\end{equation}
where $r=1$ and, therefore, $\Delta=0$. Thus, we cannot introduce free parameters in Eq.(\ref{U}), which implies that the considered pair of MUB is isolated. Indeed, $|\phi_i\rangle=|i\rangle$, $i=0,1$ and $|\psi_{\pm}\rangle=(|0\rangle\pm|1\rangle)/\sqrt{2}$  is the unique pair of MUB existing in dimension 2, up to a global rotation \cite{BWB10}.

\section{Isolated MUB and SIC-POVM}\label{S4}
In this section, we study the problem of introducing free parameters in MUB and SIC-POVM. Our first main result consists in proving that maximal sets of MUB existing in low prime power dimensions.
\begin{prop}\label{prop2}
Maximal sets of $d+1$ MUBs in dimensions $d=4, 8, 9$ and $16$ are isolated.
 \end{prop}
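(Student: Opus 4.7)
The plan is to apply the restricted-defect framework of Section~\ref{S3} to each of the four prime-power dimensions $d\in\{4,8,9,16\}$, and to prove $\Delta=0$ in every case. For a fixed $d$, I would pick a concrete realization of a maximal set of $d+1$ MUBs --- for instance, the Wootters--Fields construction --- assemble its Gram matrix $G$ from the block formula (\ref{GramMUB}), and form the hermitian unitary $U=\mathbb{I}_N-\frac{2d}{N}G$ with $N=d(d+1)$, as guaranteed by Proposition~\ref{prop1}. The task then reduces to showing that the only antisymmetric $R$ solving the linear system (\ref{soe}) for this $U$ is the trivial one, modulo the gauge $V\mapsto EVE^{\dag}$; by the discussion around (\ref{Rd}), this gives $\Delta=0$, collapses Problem $\mathcal{P}^{(3)}_{NL}$ to a trivial problem with no free parameters, and, via Proposition~\ref{prop1}, rules out any continuous family of MUBs through the chosen set.

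I would next compute $\tau$ explicitly from the block structure of $G_{\mathrm{MUB}}$. The $d+1$ diagonal blocks are identity matrices, contributing $z=(d+1)d(d-1)/2$ zeros in the strictly upper-triangular part, while the off-diagonal Hadamard blocks have no zero entries. Hence $\tau=(N-1)(N-2)/2-z$, which evaluates to $\tau=141,\,2233,\,3556,\,34545$ for $d=4,8,9,16$ respectively. With the sparse linear system (\ref{soe}) in these variables assembled, what remains is to verify by computation that its rank equals $\tau$, i.e.\ that only $R=0$ survives after gauge fixing.

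The main obstacle is the size of the linear systems, which for $d=16$ already involve on the order of $10^{4}$--$10^{5}$ unknowns, so the rank has to be certified by machine. A naive floating-point rank is not fully conclusive, since a genuinely empty kernel must be distinguished from numerical noise. I would handle this in two complementary ways. First, since the entries of $U$ lie in a cyclotomic field generated by the roots of unity appearing in the Hadamard factors, the coefficients in (\ref{soe}) are algebraic numbers, and the rank can therefore be computed symbolically to yield an exact answer. Second, as an independent check, I would compute the singular value spectrum of the coefficient matrix in high-precision arithmetic and verify that every singular value is comfortably above the noise floor --- this is precisely the robustness analysis advertised in the Introduction. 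Finally, if inequivalent orbits of maximal MUB sets are known in any of the four dimensions, the same computation must be repeated for a representative of each orbit, since the restricted defect is not an equivalence invariant a priori.
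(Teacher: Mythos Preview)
Your proposal is correct and matches the paper's own argument essentially step for step: the paper likewise fixes explicit maximal MUB sets (from Wootters--Fields and related constructions), computes $z=d(d^2-1)/2$ and $\tau=(N-1)(N-2)/2-z$, and then verifies by direct computation that the rank $r$ of the linear system (\ref{soe}) equals $\tau$ (reporting exactly your values $141,\,2233,\,3556,\,34545$), yielding $\Delta=0$. Your additional remarks on certifying the rank symbolically over a cyclotomic field and on checking each inequivalent orbit are sound refinements that the paper only partially addresses via its robustness analysis and its citation of the Sehrawat--Klimov classification.
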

The results arises as follows. The upper triangular part of the Gram matrix associated to a set of $m$ MUB in dimension $d$ contains $z=md(d-1)/2$ zero entries. Given that the size of the Gram matrix of the set is $N_{MUB}=md$ and $m=d+1$ we have $z=d(d^2-1)/2$. Therefore, the matrix $G$ contains $\tau=(N_{MUB}-1)(N_{MUB}-2)/2-z$ parameters. The next step consists in determining how many of these parameters are remaining after solving the linear problem $\mathcal{P}^{(1)}_{L}$. To this end, we calculated the number of linearly independent equations of the linear system defined in Eq.(\ref{soe}) for the cases $d=4,8,9$ and $16$, finding $r_4=141$, $r_8=2233$ and $r_9=3556$ and $r_{16}=34545$, respectively. By using these results and Eq.(\ref{Rd}) we find that the restricted defect $\Delta$ vanished in all these cases. We based our calculation of the restricted defect $\Delta$ on the maximal sets of MUB provided in Refs. \cite{WF89,SK14}. In dimension $d=9$, we considered Ref. \cite{C02} to obtain simpler expressions of results presented in \cite{WF89}.\medskip

Let us now consider the case of SIC-POVM. It is well-known that SIC-POVM for a qubit system is essentially unique, as it represents the regular tetrahedron inscribed into the Bloch sphere, up to a global rotation. Furthermore, single parameter families of SIC-POVM for a qutrit system exists \cite{RBSC04}, which represent the most general solution \cite{FS14}. For higher dimensions, the problem of introducing free parameters in SIC-POVM is still open. As a preliminary result, exhaustive numerical simulations indicate that free parameters cannot be introduced in SIC-POVM, at least in low dimensions higher than three. For this problem, we have solved the linear problem $\mathcal{P}^{(1)}_{L}$ in dimensions $d=4,\dots,16$, obtaining the following results.
\begin{prop}\label{prop3}
SIC-POVM  in dimensions $d=4,\dots,16$ are isolated.
 \end{prop}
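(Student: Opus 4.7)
The plan is to mirror the strategy already executed for Proposition \ref{prop2}. For each $d\in\{4,5,\ldots,16\}$, I would start from a known SIC-POVM fiducial vector (for instance from the catalogue of Scott and Grassl \cite{SG10}), generate the full orbit of $N=d^{2}$ vectors, assemble the Gram matrix $G$, and by Proposition \ref{prop1} form the Hermitian unitary $U=\mathbb{I}_{d^{2}}-(2/d)\,G$. All of these matrices share the same structural data: constant positive diagonal $1-2/d$ and off-diagonal entries of common modulus $2/(d\sqrt{d+1})$, so in particular $U$ has no zeros off the diagonal.

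The parameter count is then immediate. Since $S_{ij}=1/(d+1)$ for all $i\neq j$, the strictly upper triangular part of $U$ contains $z=0$ zeros, and the general formula at the end of Section \ref{S3} gives
\begin{equation*}
\tau=\frac{(N-1)(N-2)}{2}-z=\frac{(d^{2}-1)(d^{2}-2)}{2},
\end{equation*}
which ranges from $105$ for $d=4$ up to $32385$ for $d=16$. The core of the proof then reduces to solving the linear system $\mathcal{P}^{(2)}_{L}$ defined by Eq.~(\ref{soe}) for the antisymmetric variables $R_{ij}$, computing its rank $r_{d}$, and checking that $r_{d}=\tau$ in each dimension. Whenever this equality holds, Eq.~(\ref{Rd}) yields $\Delta=0$, so by the general principle established in Section \ref{S3} no non-trivial perturbation $V_{ij}(t)=U_{ij}e^{itR_{ij}}$ preserves unitarity, and the SIC-POVM is isolated.

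The main obstacle I expect is numerical rather than conceptual. Unlike the MUB families of Proposition \ref{prop2}, which admit exact Galois-theoretic expressions in prime power dimensions, generic SIC fiducials for $d>3$ are known only as high-precision floating-point data. Consequently the coefficient matrix of the linear system (\ref{soe}), whose size scales as $\tau\times\tau$ and reaches tens of thousands in the last few cases, must have its rank determined from approximate input. To make the conclusion $r_{d}=\tau$ rigorous one has to exhibit a genuine spectral gap between the nonzero singular values of this coefficient matrix and those attributable to input noise; this is exactly the content of the robustness analysis announced for Section \ref{S4}, which bounds the shift in rank induced by a controlled perturbation of $U$. Once that quantitative gap is in place for every $d\in\{4,\ldots,16\}$, the proposition follows by a case-by-case verification strictly analogous to the one carried out for the MUB proof.
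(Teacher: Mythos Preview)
Your proposal is correct and follows essentially the same route as the paper: compute $\tau=(d^{2}-1)(d^{2}-2)/2$ from $z=0$, solve the linear system $\mathcal{P}^{(2)}_{L}$ for each $d\in\{4,\ldots,16\}$, verify $r_d=\tau$ so that $\Delta=0$, and invoke the robustness analysis to certify the rank computation when the input fiducial is numerical. The only minor discrepancy is that the paper also uses the \emph{analytic} fiducials available in several of these dimensions (Zauner, Grassl, Appleby, Hoggar), not only the Scott--Grassl numerical catalogue, which lets part of the verification bypass the robustness argument altogether; your statement that fiducials for $d>3$ are known only numerically is slightly too strong.
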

This result also includes the \emph{Hoggar lines} \cite{H98}, a special class of SIC-POVM defined in dimension $d=8$. We considered the total number of parameters $\tau=(N_{SIC}-1)(N_{SIC}-2)/2$, as there is no pair of orthogonal vectors in SIC-POVM ($z=0$), and $N_{SIC}=d^2$. In order to prove Proposition \ref{prop3} we solved the linear problem $\mathcal{P}^{(1)}_{L}$ for both analytical  \cite{Z99,G04,A05,H98,G06} and highly accurate numerical SIC-POVM \cite{SG10}. In all the cases we have found $\Delta=0$, which implies that free parameters cannot be introduced. Calculations of the restricted defect have been done in Matlab.

Let us now study the robustness of the restricted defect $\Delta$ under the presence of inaccuracies in describing the POVM. Given the set of vectors $\{\phi_j\}$ defined in Eq.(\ref{Gram}) we quantify the inaccuracy in approximate vectors $\{\phi^{\prime}_j\}\approx\{\phi_j\}$ by introducing the inaccuracy factor:
\begin{equation}\label{inac}
s=\frac{1}{\sqrt{d}}\max_j \|\phi^{\prime}_j-\phi_j\|.
\end{equation}
The factor $s_{\mu}$ quantifies the maximal allowed inaccuracy in entries of vectors $\phi_j$. For example, in the case of approximate solutions $\{\phi^{\prime}_j\}$ having $k$ digits of precision we have $s\approx10^{-k}$. In our study of robustness of the defect $\Delta$, we simulate the introduction of inaccuracies by considering
\begin{equation}
(\phi^{\prime}_j)_i=(\phi_j)_i+s\xi_i,
\end{equation}
where $(x)_i$ denotes the $i$-th entry of the vector $x$ and $\xi_i$ are random numbers uniformly distributed in the interval $[-1,1]$. 

Let us assume that $\mathcal{R}$ is the real matrix associated to the linear system of equations (\ref{soe}). Note that the number of linearly independent equations of such system is given by $\mathrm{rank}(\mathcal{R})=r$. When considering inaccuracies in the POVM the rank of the perturbed matrix $\mathcal{R}^{\prime}$ and $\mathcal{R}$ may differ. Therefore, we need to study how much the singular values of $\mathcal{R}^{\prime}$ are affected under the presence of inaccuracies. In particular, we are interested on the perturbation of the two smallest singular values $\sigma_0=0$ and $\sigma_1>0$, which are the responsible for the variation of the rank. In order to obtain a confidence region for the restricted defect (\ref{Rd}) we need to consider the following two bounds: (\emph{i}) upper bound for the maximal perturbation of $\sigma_0$ and (\emph{ii}) lower bound for the maximal perturbation of $\sigma_1$. In Appendix \ref{Ap1} we show that
\begin{equation}\label{bound}
|\sigma^{\prime}_i-\sigma_i|\leq f(d,N)\,s,
\end{equation}
for $i=0,1$, where $s$ is the inaccuracy quantificator defined in Eq.(\ref{inac}) and 
\begin{equation}
f(d,N)=\frac{2^6d^2}{N}\left(1-\frac{2d}{N}\right)^2\sqrt{\frac{N-d}{N(N-1)}}
\end{equation}
for $N>2d$. Let us now find the smallest possible value of $s$ such that the critical condition $\sigma^{\prime}_0=\sigma^{\prime}_1$ holds, which imposes an upper bound for the confidence region of the restricted defect $\Delta$. By considering Eq.(\ref{bound}) we find that $\Delta$ does not change its value for $0\leq s\leq s_{max}$, where $s_{max}=\sigma_1(2f(d,N))^{-1}$. Note that $\sigma_1$ depends on the exact solution, which is not known if the exact solution is not available. By using Eq.(\ref{bound}) we find that $\sigma_1\leq \sigma^{\prime}_1+f(d,N)\,s$, which implies
\begin{equation}
s_{max}\leq\frac{\sigma^{\prime}_1+f(d,N)\,s}{2f(d,N)}.
\end{equation}
Note that this inequality provides a confidence region only if $f(d,N)\,s\ll 1$.

Confidence regions for SIC-POVM and maximal sets of MUB for $2$ and $3$ qubits are depicted in Fig. \ref{Fig1} and Fig. \ref{Fig2}, respectively. For the case of $4$ qubit systems we have solutions with precision $s=10^{-32}$, whereas the upper bound for the confidence region is $s_4\approx4\times 10^{-3}$. For MUB, we calculated the restricted defect $\Delta$ by considering analytic solutions in all the cases \cite{WF89,SK14,C02}.

\begin{figure}[!ht]
\includegraphics[width=.45\textwidth]{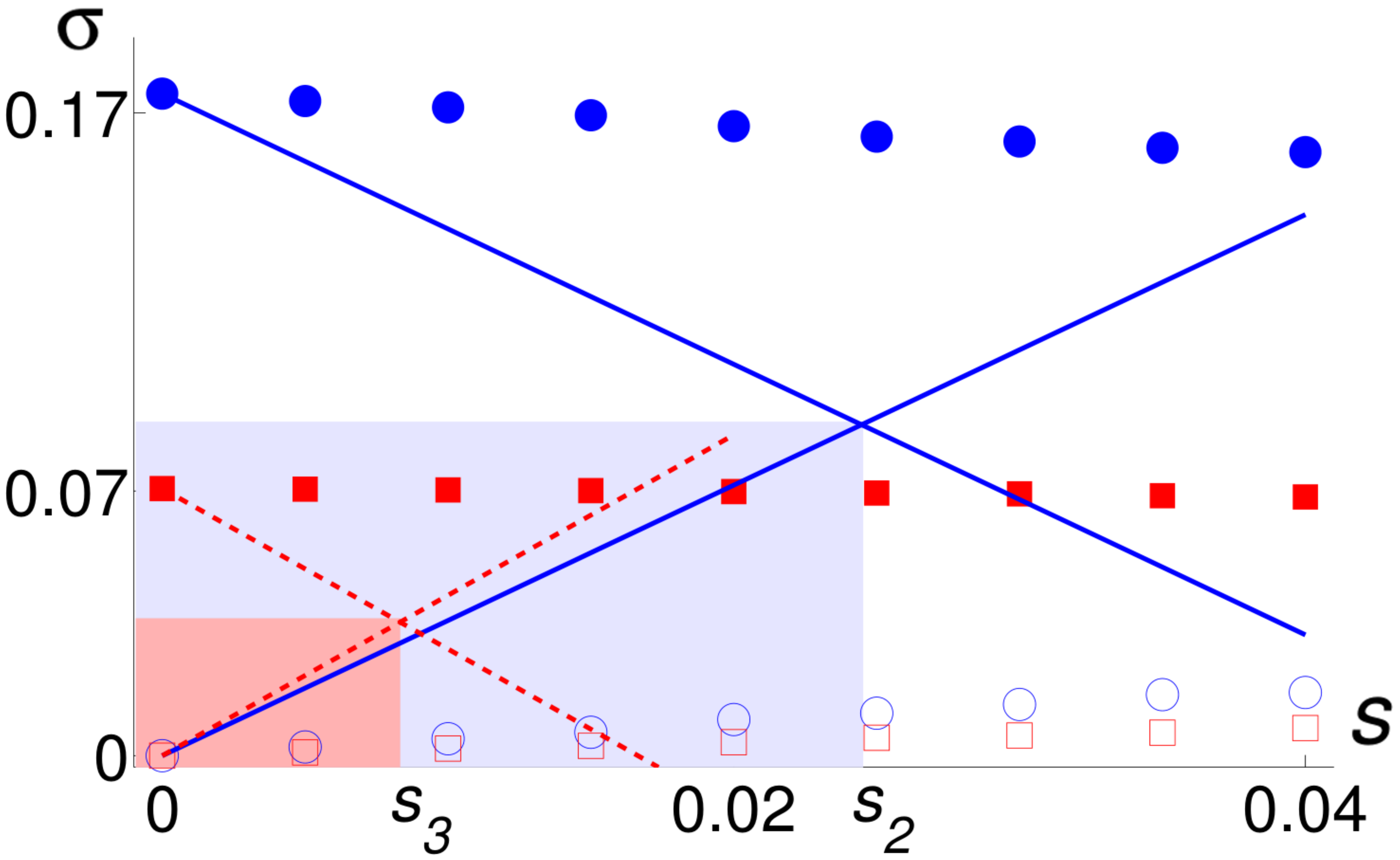}
\caption{Smallest singular values $\sigma_0$ and $\sigma_1$ of $\mathcal{R}$ as a function 
of the inaccuracy factor $s$ for SIC-POVM. Contour and filled symbols represent $\sigma_{0}$ and $\sigma_{1}$, respectively, for $2$ ($\circ$) and $3$ ({\tiny $\square$}) qubit systems.
Each case is averaged over $8$ samples randomly chosen and generated from the approximate SIC-POVM provided in Ref. \cite{SG10}, which has accuracy $s=10^{-32}$. The confidence regions (blue and red rectangles) are given by values of $s$ existing between zero and the value determined by the intersection of the lower and upper bounds. Here, $s_2$ and $s_3$ stand for $2$ and $3$-qubits systems, respectively. Outside the confidence regions it is not possible to discriminate between singular values $\sigma_0$ and $\sigma_1$.}
\label{Fig1}
\end{figure}

\begin{figure}[!ht]
\includegraphics[width=.45\textwidth]{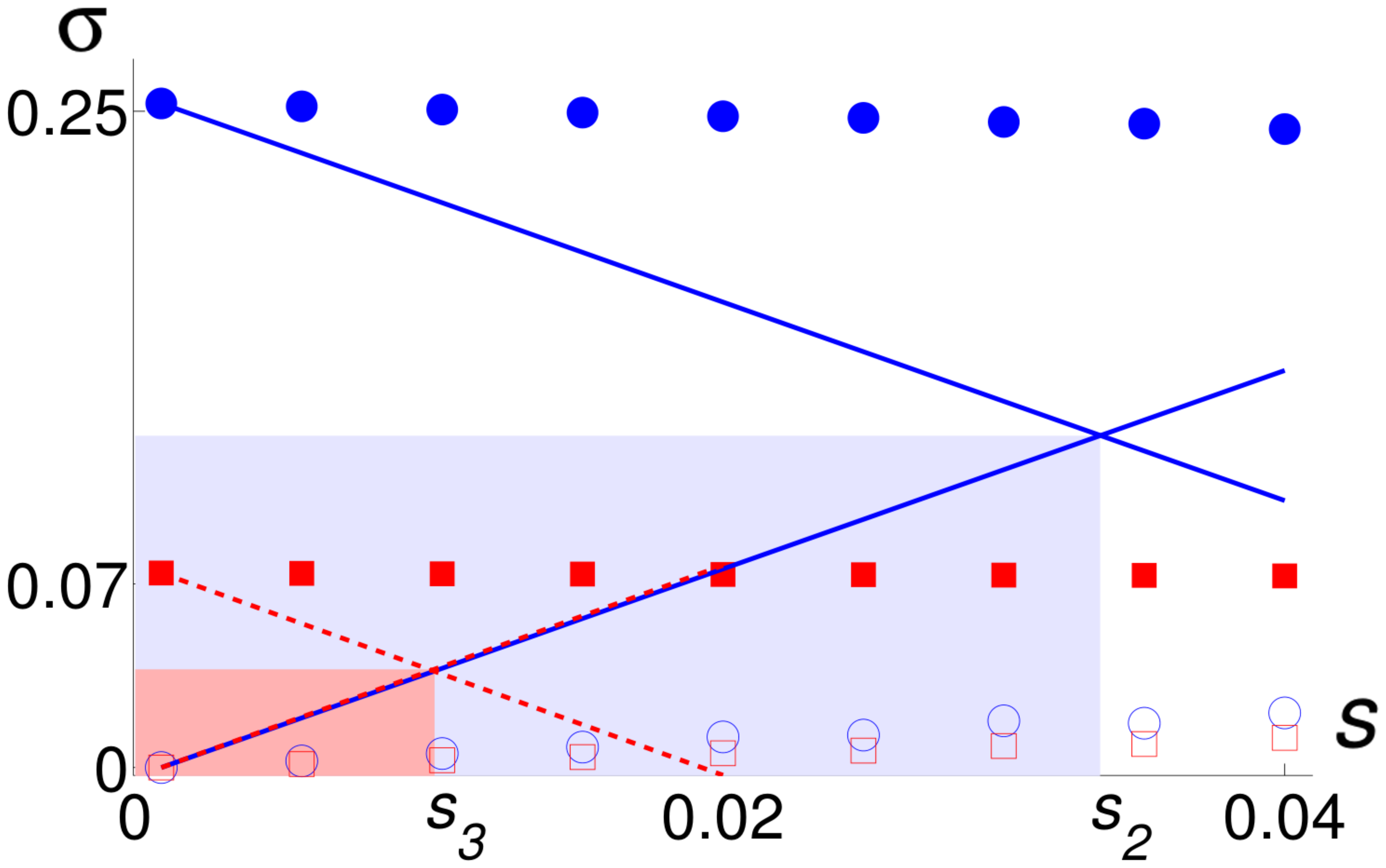}
\caption{Smallest singular values $\sigma_0$ and $\sigma_1$ of $\mathcal{R}$ as a function 
of the inaccuracy factor $s$ for maximal sets of MUB. Contour and filled symbols represent $\sigma_{0}$ and $\sigma_{1}$, respectively, for $2$ ($\circ$) and $3$ ({\tiny $\square$}) qubit systems.
Each case is averaged over $8$ samples randomly chosen and generated from analytic solutions.}
\label{Fig2}
\end{figure}

\begin{table}[!ht]
\begin{tabular}{c|cccccccc}
     $m\setminus d$  & 2 & 3 & 4 & 5 & 6 & 7 &  8 &  9 \\ 
\hline
  2  & 0 & 0 & 3 & 0 & 4 & 0 & 21 & 16 \\
  3  & 0 & 0 & 3 & 0 & $\diamond$ & 0 & 27 & 20\\
  4  & - & 0 & 0 & 0 & ? & 0 & 19 & 32\\
  5  & - & - & 0 & 0 & ? & 0 &  7 &  0\\
  6  & - & - & - & 0 & ? & 0 &  0 &  0\\
  7  & - & - & - & - & ? & 0 &  0 &  0\\
  8  & - & - & - & - & - & 0 &  0 &  0\\
  9  & - & - & - & - & - & - &  0 &  0\\
 10  & - & - & - & - & - & - &  - &  0\\
\end{tabular}
\caption{Upper bound on the maximal number of free parameters $\Delta$ allowed
by subsets of $m$ MUB in dimension $d$. The results do not depend on selecting $m$ subsets of MUB out of the full set of $d+1$ MUB. As a remarkable observation, maximal sets of
MUB are isolated. Also, subsets of $m\geq6$ in dimension 8 and $m\geq5$ MUB in
dimension 9 are isolated in all the cases.
Question marks denote our lack of knowledge about given number of MUB, while $\diamond$ indicates the case of three MUB in dimension six
which is still considered as unresolved.}
\label{T1}
\end{table}

\section{Free parameters in MUB and SIC-POVM}\label{S5}
In Section \ref{S4} we have proven that some maximal sets of MUB and SIC-POVM are isolated in low dimensions. In this section, we first calculate the restricted defect for sets of  $m=2,\dots,d+1$ MUB in dimensions $d=2,\dots,8$ (see Table \ref{T1}). Interestingly, the restricted defect for a pair of MUB $B_1$ and $B_2$ coincides with the standard defect of the complex Hadamard matrix $H=B_1^{\dag}B_2$. Indeed, the maximal number of free parameters that can be introduce in the Gram matrix
\begin{equation}\label{G2MUB}
G_{MUB}=\left(\begin{array}{cc}
\mathbb{I}_d&H\\
H^{\dag}&\mathbb{I}_d
\end{array}\right),
\end{equation}
coincides with the maximal number of parameters that can be introduced in $H$. Let us explain some details concerning Table \ref{T1}. First, note that $\Delta=0$ for every subset of $2\leq m\leq d+1$ MUB in prime dimensions $d=2,3,5,7$. This is so because every complex Hadamard matrix involved in the set is equivalent to the Fourier matrix, which is isolated in prime dimensions \cite{TZ06}. For triplets in dimension 4, we have $\Delta=3$, which coincides with the maximal number of free parameters that can be introduced \cite{BWB10}. Generic restricted defect for a pair of MUB in dimension 6 is four, coinciding with generic defect of complex Hadamard matrices of size 6 \cite{TZ08}. However, note that there is an exceptional pair of MUB for which $\Delta=0$, as an isolated complex Hadamard matrix of size 6 exists \cite{BBELTZ}.
Generic defect for triplets of MUB is not well understood ($\diamond$), see Table \ref{T1} and Ref. \cite{G13}. 

In dimensions 8 and 9 we restricted our attention to subsets of MUB arising from the maximal sets defined in Refs. \cite{WF89,SK14}. Note that subsets of $m\geq6$ MUB are isolated in dimension 8, whereas several families of $m=5$ MUB exist \cite{GG15}. Another observation is that the restricted defect $\Delta$ for maximal sets of $d+1$ MUB in dimension $d$ coincides with the defect for $d$ MUB. This is so because the ($d+1$)-th MUB is univocally determined by the first $d$ MUB. For subsets of $m<d$, the restricted defect for $m$ MUB may depend on the subset chosen. However, results presented in Table \ref{T1} are consistent for every subset of MUB. \medskip

We also studied the restricted defect for equiangular tight frames composed by $N=k^2$ vectors in dimension $d=k(k-1)/2$, typically denoted as ETF($d,N$) \cite{FM15}. These ETF have associated the following hermitian unitary matrices \cite{GT16}:
\begin{equation}\label{UFourier}
U_{i_1+ki_2+1,j_1+kj_2+1}=\omega^{i_1j_2-j_1i_2},
\end{equation}
where $i_1,i_2,j_1,j_2\in\{0,..,k-1\}$ and $\omega=e^{2\pi i/k}$. Matrix (\ref{UFourier}) is equivalent to the tensor product of Fourier matrices, $F_k\otimes F_k$, where $(F_k)_{st}=\omega^{st}$. Table \ref{T2} summarizes the restricted defect for matrix $U$ in low dimensions. Results are shown in Table \ref{T2}. 
For prime values of $k$, the formula
\begin{equation}
\Delta=\frac{1}{2}(k+1)(k-1)(k-2),
\end{equation}
matches all solutions presented in Table \ref{T2}, so we are tempted to
believe that it holds for any prime $k$.

\begin{table}[!ht]
\begin{tabular}{c|c||c|c||c|c}
$k$&$\Delta$&$k$&$\Delta$&$k$&$\Delta$\\
\hline
2&0&7&120&12&1237\\
3&4&8&273&13&924\\
4&21&9&352&14&1632\\
5&36&10&576&&\\
6&112&11&540&&
\end{tabular}
\caption{Maximal number of free parameters that can be introduced in ETF composed by $N=k^2$ vectors in dimension $d=k(k-1)/2$. The case $k=2$ corresponds to an ETF(3,4) which is isolated (regular simplex in dimension 3). Also, $k=3$ has associated a SIC-POVM in dimension 3, where  $\Delta=4$ but only 2-parametric families exist \cite{RBSC04}. For the case $k=4$ there exists several 6-parametric families of  ETF(6,16) \cite{GT16}.}
\label{T2}
\end{table}

\medskip

Let us now study the SIC-POVM problem in dimension $d=3$. By considering the fiducial state $|\phi_{00}\rangle=(1,-1,0)/\sqrt{2}$ a SIC-POVM is given by \cite{Z99}:
\begin{equation}
|\phi_{st}\rangle=X^sZ^t|\phi_{00}\rangle,
\end{equation}
where $X|j\rangle=|[j\oplus1]\rangle$, $Z|j\rangle=\omega^j|j\rangle$, $\omega=e^{2\pi i/3}$ and $\oplus$ means addition modulo 3. The $9\times 9$ Gram matrix $G_{SIC}$ of the SIC-POVM and its associated unitary matrix $U=\mathbb{I}_N-\frac{2d}{N}G_{SIC}$ depends on $\tau=36-8=28$  parameters, where the $N-1=8$ trivial parameters $R_{1,j}$ for $j=2,\dots,9$ have been set as zero. The linear system of equations (\ref{P2}) associate to problem $\mathcal{P}^{(1)}_{L}$   has $r=24$ linearly independent equations, which provides a $4$-dimensional complex set of solutions $R_{ij}$, depending on four parameters: $R_{23},R_{26},R_{48},$ and $R_{89}$. The additional restriction to have real parameters imply 
\begin{equation*}
R_{23} - 3R_{89} =R_{23} - 3 R_{26} = R_{89}-R_{26} = 0,
\end{equation*}
which is equivalent to $R_{26}=R_{89}=R_{23}/3$. After setting $t=1$, we obtain three solutions to problem $\mathcal{P}^{(1)}_{L}$: $V_{\Delta}(R_{23},R_{48})$, $V_{\Delta}(R_{26},R_{48})$ and $V_{\Delta}(R_{89},R_{48})$. Now, we are in position to solve the non-linear problem $\mathcal{P}^{(2)}_{NL}$, which is much simpler than the full non-linear problem $\mathcal{P}_{NL}$. Indeed, $\mathcal{P}^{(2)}_{NL}$ implies to solve trivial trigonometric equations, which give us the solutions $R_{26}\in\{0,\pi\}$, $R_{89}\in\{0,\pi\}$ and $R_{23}\in\{0,\pi\}$, respectively. Therefore, we generate six 1-parametric families of SIC-POVM in dimension three:
\begin{eqnarray}\label{SIC3}
&\hspace{-0.5cm}\mathcal{S}_1: V_{\Delta}(R_{23}=0,R_{48}),\hspace{0.2cm}\mathcal{S}_2: V_{\Delta}(R_{23}=\pi,R_{48}),&\nonumber\\
&\hspace{-0.5cm}\mathcal{S}_3: V_{\Delta}(R_{26}=0,R_{48}),\hspace{0.2cm}\mathcal{S}_4: V_{\Delta}(R_{26}=\pi,R_{48}),&\nonumber\\
&\hspace{-0.5cm}\mathcal{S}_5: V_{\Delta}(R_{89}=0,R_{48}),\hspace{0.2cm}\mathcal{S}_6: V_{\Delta}(R_{23}=\pi,R_{48}).&
\end{eqnarray}
Here we note that $\Delta=4$ and six 1-parametric real solutions exist. These six solutions belong to the 4-dimensional tangent plane defined by Eq.(\ref{P2}) and do not fit into a lower dimensional tangent space, which explains why $\Delta$ cannot take a lower value. Furthermore, solutions (\ref{SIC3}) are equivalent, in the sense that we can transform one into the other by applying permutation of rows or columns and multiplication of diagonal unitary operations to the Gram matrix, which is equivalent to relabel and apply global phases to vectors. Solution (\ref{SIC3}) represents the most general SIC-POVM existing in dimension three \cite{FS14}, up to equivalence. We remark that the generic hermitian defect for a SIC-POVM in dimension 3 is $\Delta=2$, with the only exception of the particular vector $|\phi_{00}\rangle$, where $\Delta=4$, however from this fact we cannot define a larger family.

\section{Isolated Kochen-Specker sets}\label{S6}
In this section, we apply our method presented in Section \ref{S3} to show that some sets of vectors used in a proof of the Kochen-Specker contextually theorem \cite{KS67}, typically called KS sets, are isolated. 

KS sets are collections of $N$ vectors in dimension $d$, which contain $m$ subsets of $d$ vectors forming orthonormal basis. Some of these orthonormal bases have common vectors, so that $N< md$. These intersections are crucial to prove that a deterministic local hidden variable theory is not possible \cite{KS67}. That is, for a system prepared in a quantum state $\rho$ and a set of KS vectors $\{\phi_0,\dots,\phi_{N-1}\}$ it is not possible to end up with $N$ deterministic probabilities $P_k=\mathrm{Tr}(\rho|\phi_k\rangle\langle\phi_k|)\in\{0,1\}$, for $k=0,\dots,N-1$. Therefore, the assumption of hidden determinism in quantum mechanics implies that predefined values of observables depend on the context in which measurements were implemented. The original proof given by Kochen and Specker involves $N=117$ vectors in dimension $d=3$ \cite{KS67}. Subsequently, examples exhibiting a lower number of vectors were found. Some remarkable examples are KS sets composed by $N=13$ vectors in dimension $d=3$  (Yo and Oh \cite{SO12}), $N=18$ vectors in dimension $d=4$ (Cabello \emph{et al.} \cite{CEG96}) and $N=21$ vectors in dimension $d=6$ (Lisonek \emph{et al.} \cite{LBPC14}). 

Let us now apply our method to prove that these three inequivalent KS sets are isolated. The first important observation is that the three KS sets form three POVM. This means that Proposition \ref{prop1} holds for these sets and, therefore, the method to introduce free parameters presented in Section \ref{S3} can be applied. In order to do so we have to calculate the restricted defect $\Delta$ defined in Eq. (\ref{Rd}), which is a function of the total number of parameters $\tau$ and the number of linearly independent equations associated to Problem $\mathbf{\mathcal{P}^{(2)}_L}$ (see Section \ref{S3}). The geometrical structure is determined by the orthogonality restrictions imposed by the KS sets. For the above mentioned three KS sets we have shown that they are isolated. The way to proceed is similar to the proof that maximal sets of MUB or SIC-POVM are isolated (see Proposition \ref{prop2}). However, there is a minor additional remark: the sets are isolated despite the restricted defect $\Delta$ of the sets is non-zero. This is so because the apparently remaining $\Delta$ free parameters can be absorbed by considering a sequence of non-trivial emphasing in the Gram matrices, which means that the free parameters can be absorbed as global phases of the KS vectors. Table \ref{T3} resumes the details of our calculations.

\begin{table}[!ht]
\begin{tabular}{c|c|c|c|c|c|c}
\,$N$\,&\,$d$\,&$z$&$\tau$&$r$&$\,\Delta\,$&\# free parameters\\
\hline
13&3&24&78&66&12&0\\
18&4&63&90&83&7&0\\
21&6&105&105&103&2&0
\end{tabular}
\caption{Isolated KS sets composed by 13 vectors in dimension 3, 18 vectors in dimension 4 and 21 vectors in dimension 6. The number of zeros ($z$) appearing into the upper triangular part of the Gram matrix of KS set, total number of parameters ($\tau$), rank of the linear system defined in Eq. \ref{P2} ($r$), and restricted defect ($\Delta=\tau-r$) are defined in Section \ref{S3}. For these three KS sets the free parameters produced by a positive restricted defect $\Delta$ can be absorbed as global phases of the vectors.}
\label{T3}
\end{table}

\section{Conclusion}\label{S7}
We studied the problem to introduce free parameters in a given POVM having prescribed symmetry, where mutually unbiased bases (MUB) and symmetric informationally complete (SIC)-POVM are relevant examples (see Section \ref{S2}). In particular, our method allows us to determine whether a given quantum $t$-design having prescribed symmetry \cite{Z99} forms an isolated structure. We introduced a powerful method that divides this full non-linear problem into a linear problem and a simpler non-linear problem (see Section \ref{S3}).

Using our method, we have proven that known maximal sets of MUB in dimension 4, 8, 9 and 16 and known SIC-POVM in dimensions $4-16$ are isolated. In particular, a special class of SIC-POVM existing for 3-qubit systems, called \emph{Hoggar lines}, is isolated (see Section \ref{S4}). Moreover, we calculated an upper bound for the maximal number of free parameters that can be introduced in subsets of $2\leq m\leq d+1$ MUB in dimensions $d=2-9$ (see Section \ref{S5}). The same study has been done for equiangular tight frames in low dimensions, which define equiangular POVM (see Table \ref{T2}). 

As a further result, we studied the robustness of our method under the presence of inaccuracies in defining the generalised measurement, which allowed us to establish a confidence region for the maximal possible number of free parameters that can be introduced (see Section \ref{S4}). The importance of robustness relies on the fact that some geometrical structures, like SIC-POVM, are established analytically in low dimension only, whereas accurate numerical solutions exist in every dimension $d\leq121$ and also in $d=124, 143, 147, 168, 172, 195, 199, 228, 259$ and $323$ \cite{SG10,S17}. A Matlab source code to support calculation of the restricted defect and additional features is available on the {\rm GitHub} platform: \url{https://github.com/matrix-toolbox/defect}.

Additionally, have proven that three Kochen-Specker contextuality sets are isolated (see Section \ref{S6}). Namely, $13$ vectors in dimension $3$ \cite{SO12}, $18$ vectors in dimension $4$ \cite{CEG96} and $21$ vectors in dimension $6$ \cite{LBPC14}.

Finally, we pose some intriguing open questions: (\emph{i}) Are maximal sets of MUB isolated in every prime power dimension? (\emph{ii}) Are SIC-POVM isolated in every dimension $d>3$? Furthermore, it would be welcome to develop a more efficient software to solve the linear problem $\mathcal{P}^{(1)}_{L}$ for POVM having $N>300$ elements, e.g. maximal sets of MUB or SIC-POVM in dimension $d>16$.

\section*{Acknowledgements}
We are grateful to Markus Grassl for his advice to study the robustness of the hermitian defect under the presence of inaccuracies, to Wojciech Tadej for discussions and fruitful correspondence concerning the restricted defect of a unitary matrix and to Ingemar Bengtsson for discussions about the restricted defect for SIC-POVM in dimension 3.
Financial support by Narodowe Centrum Nauki
under the grant number DEC-2015/18/A/ST2/00274
and by the John Templeton Foundation under the project No. 56033
is gratefully acknowledged.

\appendix

\section{Robustness of restricted defect}\label{Ap1} 
In this Appendix we derive the function $f(d,N)$ which appears in Eq.(\ref{bound}) and allows us to show that the restricted defect of a unitary matrix associated to a given generalised measurement is stable with respect to small perturbations. Consider a set of vectors $\phi_j$ and the approximate vectors $\phi^{\prime}_j=\phi_j+\delta\phi_j$. The perturbed Gram matrix is given by
\begin{eqnarray}
(G+\delta G)_{ij}&=&\langle\phi_i+\delta\phi_i|\phi_j+\delta\phi_j\rangle\nonumber\\
&\approx&\langle\phi_i|\phi_j\rangle+\langle\delta\phi_i|\phi_j\rangle+\langle\phi_i|\delta\phi_j\rangle\nonumber,
\end{eqnarray}
which implies that
\begin{equation}
|\delta G_{ij}|\leq\|\phi_i\|\|\delta\phi_i\|+\|\phi_j\|\|\delta\phi_j\|\leq2\sqrt{d}\,s.
\end{equation}
Here, we used Eq.(\ref{inac}). Also, from $U=\mathbb{I}-\frac{2d}{N}G$ we have
$|\delta U_{ii}|=0$ and $|\delta U_{ij}|\leq4d^{3/2}\,s/N$ for $i\neq j$. Let us now calculate the perturbations on entries of the matrix $\mathcal{R}$, which defines the system of equations (\ref{soe}). It is simple to show that if $N>2d$ the maximal perturbations are produced by the entries of $\mathcal{R}_{jk}=-2U_{kk}U_{kj}$, associated to the left term of Eq.(\ref{soe}). Therefore
\begin{eqnarray}
|\delta \mathcal{R}_{ij}|&=&2|U_{kk}||\delta U_{ij}|\leq2\left(1-\frac{2d}{N}\right)\frac{4d^{3/2}}{N}\,s\nonumber\\
&\leq&\frac{8\,d^{3/2}}{N}\left(1-\frac{2d}{N}\right)\,s.
\end{eqnarray}
Using this result, we have
\begin{eqnarray}\label{R1}
|\delta (\mathcal{R}^{\dag}\mathcal{R})_{ij}|&=&|\delta (\mathcal{R}^{\dag})_{ij}\mathcal{R}_{ij}+(\mathcal{R}^{\dag})_{ij}\delta (\mathcal{R})_{ij}|\nonumber\\
&\leq&2\max_{\mathcal{R}_{ij}}|\delta (\mathcal{R}^{\dag})_{ij}\mathcal{R}_{ij}|\nonumber\\
&\leq&2\max_{\mathcal{R}_{ij}}|\delta (\mathcal{R}^{\dag})_{ij}|\max_{\mathcal{R}_{ij}}|\mathcal{R}_{ij}|\nonumber\\
&\leq&\frac{2^6d^{5/2}}{N^2}\left(1-\frac{2d}{N}\right)^2\sqrt{\frac{N-d}{d(N-1)}}\,s.\nonumber
\end{eqnarray}
Now we are in position to estimate the maximal perturbation on the eigenvalues of $\mathcal{R}^{\dag}\mathcal{R}$
\begin{equation}\label{R2}
\lambda^{\prime}=\lambda_k+\delta\lambda_k\approx\lambda_k+\langle\delta (\mathcal{R}^{\dag}\mathcal{R})\rangle.
\end{equation}
From the \emph{Gerschgorin circle theorem} \cite{G31} we have
\begin{equation}\label{R3}
|\langle\delta (\mathcal{R}^{\dag}\mathcal{R})\rangle|\leq\sum_{ij}|\delta (\mathcal{R}^{\dag}\mathcal{R})_{ij}|.
\end{equation}
From combining Eqs.(\ref{R1}), (\ref{R2}) and (\ref{R3}) we find that
$|\lambda^{\prime}_i-\lambda_i|\leq f(d,N)\,s$, where
\begin{equation*}
f(d,N)=\frac{2^6d^{5/2}}{N^2}\left(1-\frac{2d}{N}\right)^2\sqrt{\frac{N-d}{d(N-1)}}\,s.
\end{equation*}
Given that $\mathcal{R}^{\dag}\mathcal{R}$ is a positive operator, its eigenvalues $\lambda_i$ coincide with its singular values $\sigma_i$. Therefore $|\sigma^{\prime}_i-\sigma_i|\leq f(d,N)\,s$, which proves Eq.(\ref{bound}).

\end{document}